\documentclass{article}
\usepackage[utf8]{inputenc}
\usepackage[T1]{fontenc}
\usepackage[light]{kpfonts}
\usepackage[utf8]{inputenc}
\usepackage[english]{babel}
\usepackage{amsmath}
\usepackage{cellspace}

\usepackage[hidelinks]{hyperref}
\usepackage[a4paper, total={6.5in, 9.3in}]{geometry}
\usepackage{graphicx}
\usepackage[shortlabels]{enumitem}
\usepackage{comment} 
\usepackage{amsthm}
\usepackage{setspace}
\usepackage{xcolor}

\theoremstyle{definition}

\newtheorem{example}{Example}[section]
\newtheorem{theorem}{Theorem}[section]

\newtheorem{proposition}{Proposition}[section]
\newtheorem{remark}{Remark}[section]
\onehalfspacing

\begin{document}
\title{\textbf{Classification of degenerate non-homogeneous Hamiltonian operators}}
\author{ Marta Dell'Atti$^1$ \qquad  Pierandrea Vergallo$^{2,3}$ \\[5mm]
 \small $^1$  School of Mathematics \& Physics \\ 
\small University of Portsmouth \\
\small \texttt{marta.dell'atti@port.ac.uk}\\
\small $^2$  Department of Mathematical, Computer,
\small  Physical and Earth Sciences\\
\small  University of Messina, \\
\small   V.le F. Stagno D'Alcontres 31, I-98166 Messina, Italy\\
\small  \texttt{pierandrea.vergallo@unime.it} \\
\small $^3$ {Istituto Nazionale di Fisica Nucleare, Sez.\ Lecce} 
}

\date{\small}

\maketitle 

\begin{abstract}
We investigate non-homogeneous Hamiltonian operators composed of a first order Dubrovin-Novikov operator and an ultralocal one. The study of such operators turns out to be fundamental for the inverted system of equations associated with a class of Hamiltonian scalar equations. Often, the involved operators are degenerate in the first order term.  For this reason a complete classification of the operators with degenerate leading coefficient in systems with two and three components is presented.
\end{abstract}

\maketitle
\section{Introduction}

\noindent
The Hamiltonian formalism for Partial Differential Equations (PDEs) is one of the leading tools to study nonlinear systems \cite{NovikovManakovPitaevskiiZakharov:TS,DubKri}, following the well known developed theory for finite dimensional ones. As shown in \cite{KerstenKrasilshchikVerbovetsky:HOpC,KVV17}, Hamiltonian operators link conserved quantities with symmetries of the system, mapping the former onto the latter, and then leading to a deeper investigation of the structure of the solutions. This formalism represents a strong theoretical and practical connection between geometry and mathematical physics
\cite{mokhov98:_sympl_poiss,DubrovinNovikov:PBHT,dorfman91:_local}.  Dubrovin and Novikov introduced differential-geometric Poisson brackets as a natural extension of finite dimensional symplectic structures in traditional Hamiltonian mechanics that turned out to arise in several examples in nonlinear PDEs. The characterisation of these structures is related to (pseudo-) Riemannian geometry and algebraic geometry, especially for systems in $1+1$ dimensions (or in independent variables $x,t$).

More in general, geometrical methods are well established tools widely used to find solutions to systems, such as the generalized hodograph method introduced by  Tsarev \cite{tsarev91:_hamil}, valid for strictly hyperbolic systems. The Hamiltonian formalism is also used to discuss the integrability. In particular, 
finding two compatible Hamiltonian structures is strictly related to the existence of infinitely many commuting symmetries and conservation laws, as proved by Magri~\cite{Magri:SMInHEq}. In the context of hydrodynamic type systems, an approach to describe integrability is based on the analysis of geometrical elements in the so called method of hydrodynamic reductions, developed for systems in $2+1$ dimensions \cite{FerKar,Mor} and then extended to systems in $1+1$ dimensions with infinitely many components~\cite{FeraMarshall}.

First, we recall some basic notions concerning the Hamiltonian formalism \cite{mokhov98:_sympl_poiss}. Let us consider a system described by $n$ field variables
\begin{equation}  
u^i=u^i(t,x)\,, \qquad  i = 1, \dots, n 
\end{equation} 
depending on the independent variables $t$, $x$, and let $u^i_\sigma$ denote the $x$-derivatives of $u$ $\sigma$ times. A Hamiltonian operator is a linear operator $ A^{ij} = a^{ij \sigma} D_{\sigma}$ such that the associated bracket for functionals $f,g$ \begin{equation}
    \{f,g\}=\int{\frac{\delta f}{\delta u^i}\,A^{ij}\,\frac{\delta g}{\delta u^j}\,dx} \,,
\end{equation} is a Poisson bracket, i.e.\ it is bilinear, skew-symmetric and satisfies the Jacobi identity
\begin{equation*}
 \begin{split}
 & \{f,g\}=-\{g,f\}\,, \hspace{2ex}  \\[1.5ex]
 & \{f,\{g,h\}\}+\{g,\{h,f\}\}+\{h,\{f,g\}\}=0 \,.
 \end{split} 
\end{equation*}
An evolutionary system 
\begin{equation}\label{1}
    u^i_t=F^i(x,{u},{u}_\sigma) \,,
\end{equation}
with $u = \{u^i\}_{i=1}^n$ is Hamiltonian if it admits the following representation
\begin{equation}
    u^i_t=F^i(x,{u},{u}_\sigma)=A^{ij} \,\frac{\delta H}{\delta u^j} \,,
\end{equation} 
where $\delta$ is the variational derivative, $H$ is the Hamiltonian functional 
$
    H = \int h(u) \, dx \,, 
$
written in terms of the Hamiltonian density $h$, and $A$ is a Hamiltonian operator. 

In \cite{DN83,DubrovinNovikov:PBHT}, Dubrovin and Novikov present  a class of Hamiltonian operators which are homogeneous in the order of derivation and are also known as  \emph{homogeneous Hamiltonian operators}. They prove \cite{DN83} that first order homogeneous operators of the form  
\begin{equation}
    g^{ij}\partial_x+\Gamma^{ij}_ku^k_x \,, 
    \label{eq:dub_nov_op}
\end{equation} 
with $\det{g}\neq 0$, are Hamiltonian if and only if
\begin{equation}
\begin{split}
         g_{ij}&=(g^{ij})^{-1} \,, \qquad \Gamma^{ij}_k =-g^{is}\,\Gamma^j_{sk},
\end{split}
\end{equation}
i.e.\ $g^{ij}$ is a flat metric and the coefficients $\Gamma^{i}_{jk}$ are Christoffel symbols for the the metric tensor $g$. Operators of this type naturally arise in homogeneous quasilinear systems of first order PDEs, also known as \emph{hydrodynamic type systems}
\begin{equation}
    u^i_t = v^i_j(u) \, u^j_x  \,, \qquad i = 1, \dots, n\,,
    \label{eq:hydro_syst}
\end{equation} where  $v(u)=( v^{i}_j )_{1 \le i,j\le n}$ is the coefficient matrix depending on the field variables. Indeed,  if~\eqref{eq:hydro_syst} is Hamiltonian with a Dubrovin--Novikov operator \eqref{eq:dub_nov_op}, it can be expressed as 
\begin{align}\begin{split}
    u^i_t&= v^i_j(u)\,u^j_x=A^{ij}\,\frac{\partial h}{\partial u^j}\\ &=\left(g^{ij}\partial_x+\Gamma^{ij}_ku^k_x\right)\, \frac{\partial h}{\partial u^j} = \left(\nabla^i\,\nabla_j\, h \right) \, u^j_x \,,
    \label{eq:quas}
\end{split}\end{align}
where $h=h(u)$ is the hydrodynamic Hamiltonian density and $\nabla_i$ the covariant derivative. 

In the case of higher order homogeneous operators of degree $m$, the Dubrovin--Novikov operator generalizes as 
\begin{align}\label{hhom}
   \begin{split}
A^{ij}=&g^{ij}D^m_x+b^{ij}_k\,u^k_x\,D^{m-1}_x+\left(C^{\,ij}_k\,u^k_{xx}+C^{\,ij}_{kl}\,u^k_x\,u^l_x\right)D^{m-2}_x\\
&\hspace{2ex}+ \dots  \,+\left(d^{ij}_k\,u^k_{nx}+\dots +d^{ij}_{k_1\dots k_m}\,u^{k_1}_x\,\cdots\, u^{k_m}_x\right) \,,
\end{split}
\end{align}  
where the coefficients $b^{ij}_k,\, C^{\,ij}_k,\, \dots $ depend on the field variables.
Dubrovin and Novikov present also an extension of homogeneous structures\cite{DubrovinNovikov:PBHT}, introducing \emph{non-homogeneous} Hamiltonian operators as sum of two or more homogeneous ones. A leading example in this context is offered by the Korteweg-De Vries equation
\begin{equation}
    u_t=6uu_x+u_{xxx} \,,
\end{equation}
which possesses a Hamiltonian structure through the operator
\begin{equation}
    A=\partial_x^{3}+2u\partial_x+u_x \,,
\end{equation} 
given by the sum of the third order operator $\partial_x^3$ and the first order operator $2u\partial_x+u_x$. 

Following the notation used by Dubrovin and Novikov, if an operator is given by the sum of two homogeneous operators of order $k$ and $m$ respectively, we denote the order of the non-homogeneous operator via the sum $k+m$.  

Let us consider the simplest case, with $k=1$ and $m=0$ for the so-called \emph{non-homogeneous operators of hydrodynamic type} $1+0$. They naturally arise in non-homogeneous quasilinear systems of first-order PDEs. Let $C^{\,ij}=A^{ij}+\omega^{ij}$, where $A^{ij}$ is homogeneous of order 1 and $\omega^{ij}$ is a symplectic structure of order 0. One can easily generalise \eqref{eq:quas} to systems of this type
\begin{align}
\begin{split}
u^i_t=&\left(g^{ij}\partial_x+\Gamma^{ij}_ku^k_x+\omega^{ij}\right)\,\frac{\partial h}{\partial u^j}\\=&\left( \nabla^i\,\nabla_j\, h\right) \, u^j_x + \tilde{\nabla}^i h \,,
\end{split}
\end{align}
where $\tilde{\nabla}^i=\omega^{is}\, \nabla_s$ and ${\nabla}_s$ is the standard gradient.

A remarkable example of a non-homogeneous quasilinear system possessing such a construction is given by the $3$-waves equation \cite{mokhov98:_sympl_poiss} \begin{equation}\begin{cases}\label{3wav}
u^1_t=-c_1u^1_x-2(c_2-c_3)u^2u^3\\[1.2ex]
u^2_t=-c_2u^2_x-2(c_1-c_3)u^1u^3\\[1.2ex]
u^3_t=-c_3u^3_x-2(c_2-c_1)u^1u^2\end{cases} \,,
\end{equation}
that is a Hamiltonian with operator
\begin{equation}C^{\,ij}=
\begin{pmatrix}
1&0&0\\0&-1&0\\0&0&-1
\end{pmatrix}\partial_x+\begin{pmatrix}
0&-2u^3&2u^2\\
2u^3&0&2u^1\\-2u^2&-2u^1&0
\end{pmatrix} \,.
\end{equation}
Finally, following the approach introduced by S. I. Tsarev \cite{tsa3}, we will see how non-homogeneous hydrodynamic operators arise in a class of systems  obtained by the inversion of an evolutionary  Hamiltonian equation (see Theorem \ref{139}). Often, the inversion of such equations leads to a degeneration of the leading coefficient $g^{ij}$ in the first order operator. This is a strong motivation for the investigation of degenerate $1+0$ structures.

In this paper, we present a complete classification of Hamiltonian operators for systems in two and three components of the form 
\begin{equation}
    C^{\,ij}=g^{ij}\partial_x+b^{ij}_ku^k_x+\omega^{ij}\,,
\end{equation}
focusing on the case when the leading coefficient is degenerate (i.e.\ its rank is lower than the number of components of the system) and some related remarkable examples of systems in $1+1$ dimensions exhibiting this feature. The importance of a deeper study of such operators has been remarked by O. I. Mokhov in \cite{mokhov98:_sympl_poiss}, who finds Hamiltonian structures of this type in the study of the real reduction of 2-waves interaction system, but also by Dubrovin and Novikon themselves \cite{DubrovinNovikov:PBHT}. 

In section \ref{2}, we introduce the conditions for non-homogeneous operators of hydrodynamic type to be Hamiltonian, either with non-degenerate or degenerate assumptions. We establish the connection between such operators and non-homogeneous systems of first order PDEs, introducing the corresponding inverted systems and their associated Hamiltonian structures. 
In section \ref{3}, we show a complete classification, up to diffeomorphisms of the manifold defined by the field variables, of degenerate operators of type $1+0$ for systems with two and three components. In section \ref{applications}, we provide several examples with Hamiltonian structures fitting the above mentioned classification, with particular emphasis on inverted Hamiltonian systems. 

\section{Non-homogeneous hydrodynamic operators}\label{2}
In this section we review non-homogeneous operators of hydrodynamic type, as originally introduced in \cite{DubrovinNovikov:PBHT} and further investigated in~\cite{FerMok1,koinho}.

Non-homogeneous operators of hydrodynamic type are introduced as the natural generalization of  homogeneous Hamiltonian operators \eqref{hhom}
\begin{equation}
  C^{\,ij}=  g^{ij}\partial_x+b^{ij}_ku^k_x+\omega^{ij} \,,
  \label{eq:operator_c}
\end{equation}
where $g^{ij},b^{ij}_k$ and $\omega^{ij}$ depend on the field variables ${u}$. Then, the underlying non-homogeneous local Poisson structure of hydrodynamic type is defined as
\begin{equation}
\begin{split}
    &\{\, u^i(x),u^j(y) \,\} = g^{i j}(u(x)) \, \delta_x (x-y) \\
    &\hspace{8ex}  + b^{i j}_{k} (u(x))\, u^k_x\, \delta (x-y)  + \omega^{i j}(u(x)) \, \delta (x-y) \,.
    \end{split}
\end{equation}
Notice that operators of type $1+0$ are composed of two homogeneous operators $A^{ij}=g^{ij}\partial_x+b^{ij}_ku^k_x$ of order $1$, and $\omega^{ij}$ of order $0$.  
The conditions for $C^{\,i j}$ to be Hamiltonian are given by the constraints for each of its homogeneous part to be Hamiltonian (Theorem  \ref{th:ham_A} and Theorem \ref{th:ham_omega} respectively) and an additional compatibility condition between the two (Theorem \ref{thm1}). 

We recall that operators of order zero, also known as \emph{ultralocal operators}, are Hamiltonian if the following conditions are satisfied.

\begin{theorem}\hspace*{-1ex}\cite{mokhov98:_sympl_poiss} \label{th:ham_omega}
The operator $\omega^{ij}(u)$ is Hamiltonian if and only if it forms  a finite-dimensional Poisson structure, i.e.\ it satisfies the conditions
\begin{align}
        &\omega^{i j} (u)= - \omega^{j i}(u)  \,,  \label{cond3} \\
        &\omega^{i s} \frac{\partial \omega^{j k}}{\partial u^s} + \omega^{j s} \frac{\partial \omega^{k i}}{\partial u^s} + \omega^{k s} \frac{\partial \omega^{i j}}{\partial u^s} =0 \,. \label{cond4} 
\end{align}
 
\end{theorem}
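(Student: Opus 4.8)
The plan is to verify directly that the bilinear form $\{f,g\} = \int \frac{\delta f}{\delta u^i}\,\omega^{ij}\,\frac{\delta g}{\delta u^j}\,dx$ defines a Poisson bracket exactly when \eqref{cond3} and \eqref{cond4} hold. Bilinearity is immediate from the linearity of the variational derivative, so the whole content lies in skew-symmetry and the Jacobi identity. Since the operator carries no $\partial_x$, both reduce to pointwise algebraic/differential conditions on the matrix $\omega^{ij}(u)$, with no boundary terms arising from integration by parts.

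First I would handle skew-symmetry. Swapping the arguments and relabelling the summation indices $i \leftrightarrow j$ gives $\{f,g\}+\{g,f\} = \int \frac{\delta f}{\delta u^i}(\omega^{ij}+\omega^{ji})\frac{\delta g}{\delta u^j}\,dx$. Because the order-zero operator forces no integration by parts, this vanishes for all $f,g$ if and only if $\omega^{ij}+\omega^{ji}=0$; testing on coordinate functionals of the form $\int u^i\phi\,dx$ with arbitrary densities shows the condition is both sufficient and necessary, which is exactly \eqref{cond3}.

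The core of the argument is the Jacobi identity. Rather than manipulating triple brackets of general functionals, I would exploit the Leibniz property and bilinearity to reduce the check to the generating brackets $\{u^i(x),u^j(y)\}=\omega^{ij}(u(x))\,\delta(x-y)$. Computing $\{u^i(x),\{u^j(y),u^k(z)\}\}$ via the chain rule yields $\omega^{is}\,\partial_{u^s}\omega^{jk}\,\delta(x-y)\,\delta(y-z)$, and summing the three cyclic permutations produces $\big(\omega^{is}\partial_{u^s}\omega^{jk}+\omega^{js}\partial_{u^s}\omega^{ki}+\omega^{ks}\partial_{u^s}\omega^{ij}\big)\,\delta(x-y)\,\delta(y-z)$. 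Since the ultralocal structure generates no derivatives of delta functions, the distributional factor $\delta(x-y)\delta(y-z)$ carries no companion $\delta'$-terms to mix the contributions, so the Jacobiator vanishes identically if and only if the bracketed cyclic sum is zero, i.e.\ \eqref{cond4}.

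The step I expect to require the most care is this reduction from the Jacobi identity on arbitrary local functionals to the condition on the generators: one must justify that skew-symmetry together with the Leibniz rule propagates vanishing of the Jacobiator from the coordinate functionals to all functionals, and confirm that the ultralocal hypothesis is precisely what prevents $\delta'$-type terms from appearing. Equivalently, one may recognise $\omega^{ij}$ as a bivector on the manifold of field variables and identify \eqref{cond4} with the vanishing of its Schouten--Nijenhuis bracket $[\omega,\omega]=0$, recovering the classical finite-dimensional Poisson condition; I would include this geometric reinterpretation to explain the phrase ``finite-dimensional Poisson structure'' in the statement.
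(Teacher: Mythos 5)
The paper does not prove this statement at all: it is quoted, with the citation to Mokhov, as a known result, so there is no internal proof to compare against. Your direct verification is correct and is essentially the classical argument from the cited literature: skew-symmetry of the bracket reduces pointwise to \eqref{cond3} because no integration by parts occurs for an order-zero operator, and the Jacobi identity, checked on the generating brackets $\{u^i(x),u^j(y)\}=\omega^{ij}(u(x))\,\delta(x-y)$, produces only products of undifferentiated delta functions, so it is equivalent to the cyclic condition \eqref{cond4}; the reduction from arbitrary local functionals to generators is the standard fact of formal variational calculus (the bracket is a derivation in each slot), and your identification of \eqref{cond4} with $[\omega,\omega]=0$ for the bivector $\omega$ correctly accounts for the phrase ``finite-dimensional Poisson structure'' in the statement.
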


We remark that in the non-degenerate case, i.e.\ $\det \omega^{ij}\neq 0$, conditions \eqref{cond3} and \eqref{cond4} are respectively skew-symmetry and closedness of the 2-form $\omega$.

In the case of operators of first order the following result holds. 
\begin{theorem}\hspace*{-1ex}\cite{mokhov98:_sympl_poiss} \label{th:ham_A}
The operator $A^{ij}$ is Hamiltonian if and only if
\begin{align} 
&g^{ij}=g^{ji} \\
&\frac{\partial g^{ij}}{\partial u^k} = b^{ij}_k+b^{ji}_k \\
&g^{is}b^{jk}_s-g^{js}b^{ik}_s=0 \\
&g^{is}\left(\frac{\partial b^{jr}_{s}}{\partial u^k}-\frac{\partial b^{jr}_{k}}{\partial u^s}\right)+b^{ij}_sb^{sr}_k-b^{ir}_s b^{sj}_k =0  \\
&g^{is}\frac{\partial b^{jr}_q}{\partial u^s}-b^{ij}_sb^{sr}_q-b^{ir}_sb^{js}_q = g^{js}\frac{\partial b^{ir}_q}{\partial u^s}-b^{ji}_sb^{sr}_q-b^{is}_qb^{jr}_s \,, \\
&\sum_{(q,k)} \left\{ \frac{\partial}{\partial u^q}\left(g^{is}\left(\frac{\partial b^{jr}_{s}}{\partial u^k}-\frac{\partial b^{jr}_{k}}{\partial u^s}\right)+b^{ij}_sb^{sr}_k-b^{ir}_sb^{sj}_k\right)  +\displaystyle \sum_{(i,j,k)} \left(b^{si}_q\left(\frac{\partial b^{jr}_k}{\partial u^s}-\frac{\partial b^{jr}_s}{\partial u^k}\right)\right) \right\} =0
\end{align} 
with the sum over $(q,k)$ and $(i,j,k)$ is on cyclic permutations of the indices. 
\end{theorem}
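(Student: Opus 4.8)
The plan is to convert the three defining properties of a Poisson bracket into differential constraints on the coefficients $g^{ij}$ and $b^{ij}_k$. Bilinearity is automatic from the linearity of $A^{ij}$ in the derivatives, so the content lies in skew-symmetry and the Jacobi identity, which I expect to reproduce the first two and the remaining four conditions respectively.

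For skew-symmetry I would work directly with the pairing $\{f,g\}=\int \frac{\delta f}{\delta u^i}\,A^{ij}\,\frac{\delta g}{\delta u^j}\,dx$. After integration by parts, the requirement $\{f,g\}=-\{g,f\}$ for all functionals is equivalent to the operator being skew-adjoint, $(A^{ij})^\dagger=-A^{ji}$, where $\dagger$ denotes the formal $L^2$-adjoint. The adjoint of the leading term $g^{ij}\partial_x$ is $-\partial_x\circ g^{ij}=-g^{ij}\partial_x-\frac{\partial g^{ij}}{\partial u^k}u^k_x$, whereas the multiplication operator $b^{ij}_k u^k_x$ is self-adjoint. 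Matching the coefficient of $\partial_x$ yields $g^{ij}=g^{ji}$, and matching the zeroth-order coefficient of $u^k_x$ yields $\frac{\partial g^{ij}}{\partial u^k}=b^{ij}_k+b^{ji}_k$; these are precisely the first two conditions.

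For the Jacobi identity I would pass to the functional multivector ($\theta$) formalism of Olver. One associates to $A^{ij}$ the bivector $\Theta=\tfrac12\int \theta_i\,A^{ij}\theta_j\,dx$ written in anticommuting variables $\theta_i$ and their $x$-derivatives, and uses that the Jacobi identity is equivalent to the vanishing of the Schouten bracket $[\Theta,\Theta]$, computed as $\mathrm{pr}\,v_{A\theta}(\Theta)=0$, the prolongation of the evolutionary vector field with characteristic $A^{ij}\theta_j$ applied to $\Theta$. The resulting integrand is cubic in the $\theta$'s and polynomial in the jet variables $u^k_x,u^k_{xx}$. Reducing modulo total $x$-derivatives and then collecting terms according to the number of $x$-derivatives distributed among the three $\theta$-factors splits the single condition $[\Theta,\Theta]=0$ into several independent tensorial equations: the top-order piece gives the symmetry $g^{is}b^{jk}_s-g^{js}b^{ik}_s=0$, the intermediate orders produce the curvature-type relation and its symmetrization, and the lowest-order piece gives the cyclic-sum compatibility. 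These are the remaining four conditions.

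The main obstacle is this last computation: organizing the prolongation so that the cubic-in-$\theta$ expression is correctly simplified modulo total derivatives and antisymmetrized using the anticommutativity of the $\theta$'s, and then extracting genuinely independent constraints at each homogeneity level. Two further points require care. First, the skew-symmetry relations must be substituted throughout in order to bring the Jacobi terms into the stated form. Second, because the classification targets the degenerate case, I would deliberately avoid assuming $\det g\neq 0$: one therefore cannot invoke the Dubrovin--Novikov reduction to a flat metric with $b^{ij}_k=-g^{is}\Gamma^j_{sk}$, and the conditions must be retained in their raw tensorial form.
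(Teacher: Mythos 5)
The paper does not actually prove this statement: it is quoted as a known result from Mokhov's survey \cite{mokhov98:_sympl_poiss}, where the conditions are derived by writing the bracket on the coordinate functions, $\{u^i(x),u^j(y)\}=g^{ij}(u(x))\,\delta'(x-y)+b^{ij}_k(u(x))\,u^k_x\,\delta(x-y)$, imposing skew-symmetry and the Jacobi identity at the distributional level, and collecting coefficients of $\delta''$, $\delta'$, $\delta$ and of the independent jet monomials, with no nondegeneracy assumption on $g$. Your proposal reaches the same six conditions by a genuinely different, and equally standard, route: formal skew-adjointness of the operator for the first two conditions (your adjoint computation is correct, including the observation that multiplication by $b^{ij}_ku^k_x$ is self-adjoint), and Olver's multivector criterion $\mathrm{pr}\,v_{A\theta}(\Theta)=0$, i.e.\ $[\Theta,\Theta]=0$, for the Jacobi part. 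The two methods are equivalent in content: grouping the cubic-in-$\theta$ integrand by how the $x$-derivatives are distributed over the three $\theta$-factors mirrors exactly Mokhov's grouping by the order of the derivatives of the delta functions, and your assignment (top order giving $g^{is}b^{jk}_s-g^{js}b^{ik}_s=0$, intermediate orders the curvature-type relations, lowest order the cyclic-sum condition) matches the known structure of that expansion. What the $\theta$-formalism buys is that reduction modulo total derivatives becomes an algebraic normalization with no distributions, and the equivalence of the Jacobi identity with the vanishing of the Schouten bracket is a citable theorem rather than something you must re-derive; what it costs is the bookkeeping you flag yourself, namely that anticommutativity and the already-established skew-symmetry relations must be substituted before the coefficients of the distinct monomial types collapse into exactly the four stated tensor equations. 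Your final caveat is also the right one for this paper's purposes: since $\det g$ may vanish, no condition may be simplified by contracting with $g^{-1}$, which is precisely why the list must be retained in the raw form given in the theorem rather than reduced to the flat-metric and Christoffel-symbol form of Dubrovin--Novikov.
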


Let us remark that here there is no assumption about the non-degeneracy properties of metric. The conditions for non-homogeneous operators of hydrodynamic type to be Hamiltonian are shown in the following theorem.
\begin{theorem}\hspace*{-1ex}\cite{FerMok1, mokhov98:_sympl_poiss}\label{thm1}
The operator \eqref{eq:operator_c} is Hamiltonian if and only if $g^{i j}\, \partial_x + b^{i j}_{\,\, k} \, u^k_x$ 
is Hamiltonian, $\omega^{i j}$ is Hamiltonian, and the compatibility conditions are satisfied
\begin{align}\label{cond1}
    \Phi^{i j k} &= \Phi^{k i j} \,,
\\\label{eq:phi}
        \frac{\partial \Phi^{i j k}}{\partial u^r} & = \sum_{(i,j,k)}  b^{s i}_{r}\, \frac{\partial \omega^{j k}}{\partial u^s} + \left( \frac{\partial b^{i j}_{r}}{\partial u^s} - \frac{\partial b^{i j}_{s}}{\partial u^r}  \right)\omega^{s k}\,,
\end{align}
where $\Phi^{i j k}$ is the $(3,0)$-tensor
\begin{align}
        \Phi^{i j k} & = g^{i s}\, \frac{\partial \omega^{j k}}{\partial u^s} - b^{i j}_{s} \, \omega^{s k} - b^{i k}_{s} \, \omega^{j s} \,. 
\end{align} 
\end{theorem}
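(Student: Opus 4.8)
The plan is to verify directly the two defining properties of a Poisson bracket---skew-symmetry and the Jacobi identity---for the local bracket
$\{u^i(x),u^j(y)\}=g^{ij}\,\delta'(x-y)+\bigl(b^{ij}_k u^k_x+\omega^{ij}\bigr)\delta(x-y)$
associated with $C^{ij}$, and to exploit that $C=A+\omega$ is the sum of a piece $A^{ij}=g^{ij}\partial_x+b^{ij}_ku^k_x$ that is homogeneous of differential order $1$ and a piece $\omega^{ij}$ of order $0$. I would first settle skew-symmetry. Comparing $\{u^i(x),u^j(y)\}$ with $-\{u^j(y),u^i(x)\}$ and reducing everything to the point $x$ by means of the parity rules $\delta(x-y)=\delta(y-x)$, $\delta'(x-y)=-\delta'(y-x)$ and the identity $f(u(y))\,\delta'(x-y)=f(u(x))\,\delta'(x-y)+\partial_k f\,u^k_x\,\delta(x-y)$, the coefficient of $\delta'$ yields $g^{ij}=g^{ji}$, the $u_x$-dependent part of the coefficient of $\delta$ yields $\partial_k g^{ij}=b^{ij}_k+b^{ji}_k$ (the skew-symmetry of $A$), and the $u_x$-independent part yields $\omega^{ij}=-\omega^{ji}$, which is \eqref{cond3}.

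The heart of the argument is the Jacobi identity. I would write the cyclic sum $\{u^i(x),\{u^j(y),u^k(z)\}\}+\text{cyclic}$ as a distribution in products of delta functions and their derivatives in $x,y,z$, reduce it to a canonical form, and grade it by homogeneity, assigning degree $1$ to $\partial_x$ and to every field derivative $u^\ell_x$. In this grading the bracket of $A$ is homogeneous of degree $1$ and the bracket of $\omega$ of degree $0$, so the Jacobi expression of $C$ splits into three independent homogeneous blocks of degrees $2$, $1$ and $0$; equivalently, writing $P$ for the bivector of $C$ on the formal loop space, $[[P,P]]=[[P_A,P_A]]+2[[P_A,P_\omega]]+[[P_\omega,P_\omega]]$ with the three summands sitting in distinct degrees. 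Hence the total vanishes if and only if each block vanishes separately.

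I would then identify the blocks. The degree-$2$ block is precisely the Jacobi identity $[[P_A,P_A]]=0$ for the first-order operator alone, reproducing the remaining conditions of Theorem \ref{th:ham_A}; the degree-$0$ block is the Jacobi identity $[[P_\omega,P_\omega]]=0$ for the ultralocal structure, i.e.\ \eqref{cond4} of Theorem \ref{th:ham_omega}. The genuinely new content is the degree-$1$ cross block $[[P_A,P_\omega]]=0$. Expanding the mixed cyclic sum $\{u^i,\{u^j,u^k\}_\omega\}_A+\{u^i,\{u^j,u^k\}_A\}_\omega+\text{cyclic}$ and collecting separately the coefficients of $\delta'$ and of $\delta$ produces two families of relations; introducing the combination $\Phi^{ijk}=g^{is}\partial_s\omega^{jk}-b^{ij}_s\omega^{sk}-b^{ik}_s\omega^{js}$, the $\delta'$-coefficient collapses to the cyclic symmetry $\Phi^{ijk}=\Phi^{kij}$ of \eqref{cond1}, while the $\delta$-coefficient reorganizes into the differential identity \eqref{eq:phi}.

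I expect the main obstacle to be exactly this last step: performing the delta-function reductions---repeatedly applying the identity above and its $z$-analogue---without sign or index slips, and then recognizing the resulting raw relations as the compact conditions expressed through $\Phi$. A secondary point worth recording is that $\Phi^{ijk}$ is a genuine $(3,0)$-tensor, so that \eqref{cond1}--\eqref{eq:phi} are coordinate independent; this is checked from its transformation law, using that $g^{ij}=g^{ji}$ and $\omega^{ij}=-\omega^{ji}$ already hold. The converse implication is immediate from the same decomposition: if $A$ and $\omega$ are separately Hamiltonian and \eqref{cond1}--\eqref{eq:phi} hold, then all three blocks vanish and so does their sum, giving the Jacobi identity for $C$.
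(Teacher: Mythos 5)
The paper does not actually prove this theorem: it is quoted as a known result from the cited references (Ferapontov--Mokhov and Mokhov), so there is no in-paper proof to compare against. Your proposal correctly reconstructs the standard argument from that literature: skew-symmetry fixes $g^{ij}=g^{ji}$, $\partial_k g^{ij}=b^{ij}_k+b^{ji}_k$ and $\omega^{ij}=-\omega^{ji}$; then the Jacobi identity for $C=A+\omega$, viewed through the Schouten-bracket expansion $[[P_C,P_C]]=[[P_A,P_A]]+2[[P_A,P_\omega]]+[[P_\omega,P_\omega]]$, splits into three independent blocks because the summands are homogeneous of distinct degrees ($2$, $1$, $0$) in the grading counting $\partial_x$ and $x$-derivatives of the fields, so each block must vanish separately; the degree-$2$ and degree-$0$ blocks reproduce Theorems \ref{th:ham_A} and \ref{th:ham_omega}, and the mixed block, after reduction of the delta-function identities, is exactly the pair \eqref{cond1}--\eqref{eq:phi} written via the tensor $\Phi^{ijk}$. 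This is sound; the only caveat is that your sketch defers the actual distributional computation identifying the cross block with \eqref{cond1}--\eqref{eq:phi}, which is the laborious (but routine) heart of the proof in the cited sources.
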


\section{Classification for systems in two and three components}\label{3}
Savoldi\cite{sav1} presents a complete classification of degenerate first order homogeneous operators for systems with two and three components. Starting from these results, in this section we provide a novel complete classification of degenerate operators of type 1+0. 
To obtain an explicit form of~$\omega^{ij}$ by means of Theorem \ref{thm1} it is sufficient to solve conditions~\eqref{cond1} and~\eqref{eq:phi} with fixed tensors~$g^{ij}$ and~$b^{ij}_k$, giving rise to an overdetermined system of PDEs. 
In addition, we require the ultralocal operator $\omega^{ij}$ to be Hamiltonian imposing \eqref{cond3} and \eqref{cond4}, via Theorem \ref{th:ham_omega}. { In appendix A we report the details of the computations.} 

The following computations are carried out with the support of computer algebra methods, implemented in Maple, Reduce and Mathematica. 
The use of symbolic computation for integrable systems and Hamiltonian structures is itself an ongoing topic of research\cite{KVV17,Vit1}.


\subsection{Systems in \texorpdfstring{$n=2$}{n=2} components}
Let us consider systems with two components, with field variables $u$, $v$. In general, given $n$ the number of components of the hydrodynamic system, in the degenerate case the operator $g^{ij}$ can be classified by its rank, with $\textup{rank}\left( g^{i j}\right)~=~m<n$. In the following, we explicit the number of components $n$ for the operator $C^{\,ij}_{n,k}$ while the index $k$ is used to distinguish between different operators. 

For $n=2$, $\textup{rank} (g^{i j})$ is in $\{0,1\}$. The only solution for the case $\textup{rank} (g^{i j})=0 $ is the trivial one, then the operator reduces to a symplectic form. 
In the case $\textup{rank} (g^{i j})=1 $, we can construct two different operators,    
    \begin{align}
        C^{\,i j}_{2,1} &= \begin{pmatrix} 
        \partial_x & 0 \\ 
        0 & 0 
        \end{pmatrix} + \begin{pmatrix}
        0 & f( v ) \\
        -f( v ) & 0 
        \end{pmatrix} \,, \label{eq:C2_1} \\[1.5ex] 
        C^{\,i j}_{2,2} &= \begin{pmatrix} 
        \partial_x & 0 \\[1.5ex]  
        0 & 0 
        \end{pmatrix} + \begin{pmatrix}
        0 & -\dfrac{v_x}{u} \\ 
        \dfrac{v_x}{u} & 0 
        \end{pmatrix} + \begin{pmatrix}
        0 & \dfrac{f( v )}{u} \\[1.5ex] 
        -\dfrac{f( v )}{u} & 0 
        \end{pmatrix} \,,
    \end{align}
where $f(v)$ is an arbitrary function depending only on the variable $v$.

\begin{theorem}\label{thmde2}
Every degenerate operator of type $1+0$ in two components can be mapped either onto an ultralocal Hamiltonian operator or onto one between $C^{\,ij}_{2,1}$ and $C^{\,ij}_{2,2}$.
\end{theorem}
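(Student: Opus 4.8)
The plan is to reduce the classification to a normal-form computation in two stages, leveraging both Savoldi's classification of the first-order part and Theorem~\ref{thm1} for the compatibility between the leading operator and the ultralocal term. First I would invoke the fact that the degenerate first-order homogeneous operator $g^{ij}\partial_x + b^{ij}_k u^k_x$ with $\mathrm{rank}(g^{ij}) = m < n$ has already been classified up to diffeomorphisms of the field-variable manifold: for $n=2$ the only nontrivial case is $\mathrm{rank}(g^{ij}) = 1$, and one may bring $g^{ij}$ to the canonical constant form $\mathrm{diag}(1,0)$ by a suitable change of coordinates, fixing thereby the admissible $b^{ij}_k$. The case $\mathrm{rank}(g^{ij}) = 0$ collapses the operator to a pure ultralocal one $\omega^{ij}$, which by Theorem~\ref{th:ham_omega} is already a Hamiltonian (symplectic) structure; this disposes of the first alternative in the statement.

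With $g^{ij}$ and $b^{ij}_k$ fixed in normal form, the next step is to solve for $\omega^{ij}$. Since $n=2$, skew-symmetry \eqref{cond3} forces $\omega^{ij}$ to have the single-entry shape $\begin{pmatrix} 0 & \varphi \\ -\varphi & 0 \end{pmatrix}$ for some function $\varphi(u,v)$, and the Jacobi condition \eqref{cond4} is automatically satisfied in two components (there is no room for a nontrivial three-index cyclic sum). Thus the whole content lies in the compatibility conditions \eqref{cond1}--\eqref{eq:phi} of Theorem~\ref{thm1}. I would substitute the normal-form $g^{ij}$, $b^{ij}_k$ and the single unknown $\varphi$ into the tensor $\Phi^{ijk}$ and write out \eqref{cond1} and \eqref{eq:phi} componentwise. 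This yields an overdetermined linear system of first-order PDEs for $\varphi$, whose general solution I expect to split into exactly two branches according to whether the relevant $b^{ij}_k$ vanishes: the branch with trivial $b$ gives $\varphi = f(v)$, reproducing $C^{\,ij}_{2,1}$, while the branch tied to the nonzero-$b$ normal form produces $\varphi = f(v)/u$ together with the matching first-order correction, reproducing $C^{\,ij}_{2,2}$.

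The final step is to argue that these two branches are genuinely distinct up to diffeomorphism and that no further cases arise — i.e.\ that the residual diffeomorphism freedom preserving the normal form of $g^{ij}$ and $b^{ij}_k$ cannot merge or eliminate the arbitrary function $f(v)$, nor interpolate between the two operators. This amounts to tracking the isotropy subgroup of coordinate changes that fixes $\mathrm{diag}(1,0)$ and the chosen $b^{ij}_k$, and checking how it acts on the solution $\varphi$.

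The main obstacle I anticipate is not any single PDE but the bookkeeping of the diffeomorphism reduction: one must be careful that Savoldi's normal forms for the first-order part are exhaustive and that the residual coordinate freedom is correctly exploited, so that the two operators $C^{\,ij}_{2,1}$ and $C^{\,ij}_{2,2}$ really exhaust all inequivalent solutions rather than merely two representatives among possibly more. Verifying that the overdetermined system \eqref{cond1}--\eqref{eq:phi} admits no solutions outside these two families — in particular ruling out mixed or higher-degree dependence of $\varphi$ on $u$ — is the step where the computer-algebra support mentioned in the text will be essential.
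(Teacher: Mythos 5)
Your proposal follows essentially the same route as the paper: fix the degenerate first-order part via Savoldi's classification (rank $0$ yielding the ultralocal alternative, rank $1$ yielding the two normal forms with trivial and nontrivial $b^{ij}_k$), then solve the compatibility conditions \eqref{cond1}--\eqref{eq:phi} of Theorem \ref{thm1} together with the ultralocal Hamiltonian conditions for the single unknown $\varphi=\omega^{12}$; your anticipated solutions $\varphi=f(v)$ and $\varphi=f(v)/u$ are exactly $C^{\,ij}_{2,1}$ and $C^{\,ij}_{2,2}$, and your observation that the Jacobi condition \eqref{cond4} is automatic for $n=2$ is correct. The only addition is your final step on the inequivalence of the two branches under residual diffeomorphisms, which the statement as phrased does not require and the paper does not address.
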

\begin{proof}
Considering Theorem \ref{thm1}, we compute the symplectic structure satisfying \eqref{cond1} and \eqref{eq:phi} for each degenerate operator of the classification introduced by Savoldi in two components. 
\end{proof}


\subsection{Systems in \texorpdfstring{$n=3$}{n=3} components}
Let us consider the case of systems with three components $u$, $v$, $w$, for which the degenerate metric has $\textup{rank}\left( g^{i j} \right)$ in~$\{0,1,2\}$. We denote with $f,g,h,l$ arbitrary functions, specifying the explicit dependence on the variables, and with $c$ arbitrary constants.   

\begin{itemize}
    \item $\textup{rank} \left( g^{i j} \right) =0 $
    \begin{equation}
        C_{3,1}^{i j} = 
        \begin{pmatrix}
        0 & w_x & 0 \\
        -w_x & 0 & 0 \\
        0 & 0 & 0 \\
        \end{pmatrix} + 
        \begin{pmatrix}
        0 & f(u,v,w) & 0 \\
        -f(u,v,w) & 0 & 0 \\
        0 & 0 & 0 \\
        \end{pmatrix} \,,
    \end{equation}
    
    \item $\textup{rank} \left( g^{i j} \right) =1 $
    \begin{equation}
        C_{3,2}^{i j} = 
        \begin{pmatrix}
        \partial_x & 0 & 0 \\
        0 & 0 & 0 \\
        0 & 0 & 0 \\
        \end{pmatrix} + 
        \begin{pmatrix}
        0 & f(v,w) & g(v,w) \\
        -f(v,w) & 0 & h(v,w) \\
        -g(v,w) & -h(v,w) & 0 \\
        \end{pmatrix} \,,
        \label{eq:C2_three}
    \end{equation}
    where the function $f(v,w)$ is expressed in terms of the functions $h(v,w)$ and $g(v,w)$ as 
    \begin{equation}
        f(v,w) = h(v,w) \left( l(w)\,  +  \int_{1}^v \frac{g(s,w) \partial_w h(s,w) - h(s,w) \partial_w g(s,w)}{h(s,w)^2} \, ds \right) \,.
        \label{eq:funcional_expr_C2_C5}
    \end{equation}

    \begin{align}
        C_{3,3}^{i j} &= 
        \begin{pmatrix}
        \partial_x & 0 & 0 \\
        0 & 0 & 0 \\
        0 & 0 & 0 \\
        \end{pmatrix} + 
        \begin{pmatrix}
        0 & w_x & 0 \\
        -w_x & 0 & 0 \\
        0 & 0 & 0 \\
        \end{pmatrix} + 
        \begin{pmatrix}
        0 & f(v,w) & 0 \\
        -f(v,w) & 0 & 0 \\
        0 & 0 & 0 \\
        \end{pmatrix} \,, \\[2ex]
        C_{3,4}^{i j} &= 
        \begin{pmatrix}
        \partial_x & 0 & 0 \\[1ex]
        0 & 0 & 0 \\[1ex]
        0 & 0 & 0 \\
        \end{pmatrix} + 
        \begin{pmatrix}
        0 & 0 & -\dfrac{w_x}{u} \\
        0 & 0 & 0 \\
        \dfrac{w_x}{u} & 0 & 0 \\
        \end{pmatrix} + 
        \begin{pmatrix}
        0 & 0 & \dfrac{f(v,w)}{u} \\[1ex]
        0 & 0 & 0 \\[1ex]
        -\dfrac{f(v,w)}{u} & 0 & 0 \\
        \end{pmatrix} \,, \\[2ex]
        C_{3,5}^{i j} &= 
        \begin{pmatrix}
        \partial_x & 0 & 0 \\[1ex]
        0 & 0 & 0 \\[1ex]
        0 & 0 & 0 \\
        \end{pmatrix} + 
        \begin{pmatrix}
        0 & -\dfrac{v_x}{u} & -\dfrac{w_x}{u} \\
        \dfrac{v_x}{u} & 0 & 0 \\[1ex]
        \dfrac{w_x}{u} & 0 & 0 \\
        \end{pmatrix} + 
        \begin{pmatrix}
        0 & \dfrac{f(v,w)}{u} & \dfrac{g(v,w)}{u} \\[1ex]
        -\dfrac{f(v,w)}{u} & 0 & \dfrac{h(v,w)}{u} \\[1ex]
        -\dfrac{g(v,w)}{u} & -\dfrac{h(v,w)}{u} & 0 \\
        \end{pmatrix} \,,
   \end{align}
    with $f(v,w)$ given in \eqref{eq:funcional_expr_C2_C5}.

\vspace{5ex}

    \item $\textup{rank} \left( g^{i j} \right) =2 $
 \begin{align}
        C_{3,6}^{i j} &= 
        \begin{pmatrix}
        \partial_x & 0 & 0 \\[1ex]
        0 & \partial_x & 0 \\[1ex]
        0 & 0 & 0 \\
        \end{pmatrix} + 
        \begin{pmatrix}
        0 & f(w) & g(w) \\[.5ex]
        -f(w) & 0 &c\,g(w) \\
        -g(w) & -c\, g(w) & 0 \\
        \end{pmatrix} \,, \\[3ex]
        C_{3,7}^{i j} &= 
        \begin{pmatrix}
        \partial_x & 0 & 0 \\[1ex]
        0 & \partial_x & 0 \\[1ex]
        0 & 0 & 0 \\
        \end{pmatrix} + 
        \begin{pmatrix}
        0 & 0 & 0 \\
        0 & 0 & -\dfrac{w_x}{v} \\
        0 & \dfrac{w_x}{v} & 0 \\
        \end{pmatrix} + 
        \begin{pmatrix}
        0 & 0 & c f(w) \\[1ex]
        0 & 0 & \dfrac{(1-c u)f(w)}{v} \\[1ex]
        -c f(w) & -\dfrac{(1-c u)f(w)}{v} & 0 \\
        \end{pmatrix} \,,\\[3ex]\end{align}
        \begin{align}
\begin{split} 
        C_{3,8}^{i j} &= 
        \begin{pmatrix}
        \partial_x & 0 & 0 \\[2ex]
        0 & \partial_x & 0 \\[2ex]
        0 & 0 & 0 \\
        \end{pmatrix} \,+ \,
        \begin{pmatrix}
        0 & 0 & -\dfrac{w w_x}{u w -v} \\[1.5ex]
        0 & 0 & \dfrac{w_x}{u w -v} \\[1.5ex]
        \dfrac{w w_x}{u w -v} & -\dfrac{w_x}{u w -v} & 0 \\
        \end{pmatrix} \\[1.5ex] \,& \hspace{3ex} + (1+w^2) \, f(w) 
        \begin{pmatrix}
        0 & \dfrac{1}{(1+w^2)} & \dfrac{\left( w - c\, v \sqrt{1+w^2} \right)}{uw-v} \\[2.5ex]
        -\dfrac{1}{(1+w^2)} & 0 & -\dfrac{\left( 1 - c\, u \sqrt{1+w^2}  \right) }{uw -v} \\[2.5ex]
        -\dfrac{\left( w -  c\, v \sqrt{1+w^2}  \right)}{uw-v} & \dfrac{\left( 1 - c \, u \sqrt{1+w^2}  \right) }{uw -v} & 0 \\
        \end{pmatrix} \,,
      \end{split}   
    \end{align}  
    \begin{align}
        C_{3,9}^{i j} &= 
        \begin{pmatrix}
        0 & \partial_x & 0 \\
        \partial_x & 0 & 0 \\
        0 & 0 & 0 \\
        \end{pmatrix} + 
        \begin{pmatrix}
        0 & f(w) & c\, g(w) \\
        -f(w) & 0 & g(w) \\
        -c\,g(w) & -g(w) & 0 \\
        \end{pmatrix}  \,, \\[2ex]
     \begin{split} 
        C_{3,10}^{i j} &= 
        \begin{pmatrix}
        0 & \partial_x & 0 \\[1ex]
        \partial_x & 0 & 0 \\[1ex]
        0 & 0 & 0 \\
        \end{pmatrix} + 
        \begin{pmatrix}
        0 & 0 & -\dfrac{w_x}{v} \\[1ex]
        0 & 0 & 0 \\[1ex]
        \dfrac{w_x}{v} & 0 & 0 \\
        \end{pmatrix} +  
        \begin{pmatrix}
        0 & f(w) & \dfrac{h(w) - u g(w)}{v}  \\[1ex]
        -f(w) & 0 & g(w) \\[1.5ex]
        -\dfrac{h(w) - u g(w)}{v} & -g(w) & 0 \\[1.5ex] 
        \end{pmatrix} \,,
        \end{split} 
    \end{align}
    with the additional condition 
    \begin{equation}\label{41}
        h(w) g'(w) - g(w) \left( f(w) + h'(w) \right) =0 \,,  
    \end{equation}
\begin{equation}
    \begin{split}
             C_{3,11}^{i j} &= 
        \begin{pmatrix}
        0 & \partial_x & 0 \\[2ex]
        \partial_x & 0 & 0 \\[2ex]
        0 & 0 & 0 \\
        \end{pmatrix} + 
        \begin{pmatrix}
        0 & 0 & \dfrac{w_x}{u w -v} \\[1.5ex]
        0 & 0 & -\dfrac{w w_x}{u w -v} \\[1.5ex]
        -\dfrac{w_x}{u w -v} & \dfrac{w w_x}{u w -v} & 0 \\
        \end{pmatrix} \\[1.5ex] \,& \hspace{7ex} 
        + f(w)
        \begin{pmatrix}
        0 & \dfrac{c }{\sqrt{w}} & \dfrac{\left( uw - 2c \sqrt{w}  \right) }{uw -v}  \\[1ex]
        -\dfrac{c }{\sqrt{w}} & 0 & -\dfrac{w\left( v - 2 c \sqrt{w}\right) }{uw - v}  \\[1ex]
        -\dfrac{\left( uw - 2c \sqrt{w}  \right) }{uw -v} & w\, \dfrac{w\left( v - 2 c \sqrt{w}\right) }{uw - v} & 0 \\
        \end{pmatrix}
    \end{split}
\end{equation}
\end{itemize}
\begin{remark}
Condition \eqref{41} can be explicitly solved with respect to any function among~$f$, $g$ and $h$. 
\end{remark}

\begin{theorem}\label{thmde3}
Every degenerate operator of type $1+0$ in three components can be mapped either onto an ultralocal operator satisfying the closure relation, or onto one among $C^{\,ij}_{3,k}$ with~$k=1,\dots , 11$. 
\end{theorem}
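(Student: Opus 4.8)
The plan is to proceed by exhaustive case analysis, exactly parallel to the two-component argument of Theorem \ref{thmde2}, but now over the richer list of three-component normal forms. The starting point is Savoldi's classification \cite{sav1} of degenerate first-order homogeneous operators $A^{ij} = g^{ij}\partial_x + b^{ij}_k u^k_x$ in three components, organized by $\mathrm{rank}(g^{ij}) = m \in \{0,1,2\}$. For each canonical pair $(g^{ij}, b^{ij}_k)$ on that list the conditions of Theorem \ref{th:ham_A} already hold by construction, so the only remaining task is to determine all admissible ultralocal parts $\omega^{ij}$.

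First I would substitute each fixed $(g, b)$ into the compatibility relations \eqref{cond1} and \eqref{eq:phi} of Theorem \ref{thm1} and into the Hamiltonicity conditions \eqref{cond3}--\eqref{cond4} of Theorem \ref{th:ham_omega}. Since a $3\times 3$ skew-symmetric $\omega$ has only three independent entries $\omega^{12}, \omega^{13}, \omega^{23}$ (functions of $u,v,w$), and since $\Phi^{ijk}$ is linear in $\omega$, these relations assemble into an overdetermined linear system of first-order PDEs for those three functions, supplemented by the quadratic Jacobi constraint \eqref{cond4}. The general solution in each case is then obtained with the symbolic-algebra tools mentioned above; it typically depends on one or more arbitrary functions of one or two variables together with arbitrary constants, and branches in which the first-order part decouples collapse to a purely ultralocal $\omega$, producing the first alternative in the statement.

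The concluding step is normalization. For each $(g,b)$ normal form I would identify the residual group of diffeomorphisms of the $(u,v,w)$-manifold that preserves that form, and use it to reduce the arbitrary functions in the general $\omega$ to the canonical representatives appearing in $C^{ij}_{3,k}$. This is where the constraints \eqref{eq:funcional_expr_C2_C5} and \eqref{41}, together with the surviving constants $c$ and free functions $f,g,h,l$, emerge as genuine moduli that cannot be removed by any admissible change of variables.

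I expect the main obstacle to be the rank-two stratum. There the compatibility conditions couple all three components of $\omega$ nontrivially, and the solutions exhibit the algebraically intricate structure seen in $C^{ij}_{3,8}$ and $C^{ij}_{3,11}$, with rational and square-root dependence on the fields. Verifying that the resulting list is simultaneously exhaustive and irredundant is the delicate part: it requires careful bookkeeping of the residual diffeomorphism action case by case, in particular checking that no two listed operators are related by an admissible change of variables, so that the $C^{ij}_{3,k}$ are truly distinct canonical forms.
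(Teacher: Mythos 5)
Your core strategy coincides with the paper's: run through Savoldi's three-component normal forms for the degenerate first-order part $(g^{ij},b^{ij}_k)$ (so that Theorem \ref{th:ham_A} holds automatically), write a general skew-symmetric ultralocal part with the three unknown entries $\omega^{12},\omega^{13},\omega^{23}$, impose the compatibility conditions \eqref{cond1}--\eqref{eq:phi} together with the closure condition \eqref{cond4}, and solve the resulting overdetermined PDE system case by case with computer algebra, the trivial branches giving the purely ultralocal alternative. This is exactly what the paper does (its Appendix A carries out this computation explicitly for $C_{3,2}$ and $C_{3,3}$), and these two steps alone already prove the statement, since the theorem only asserts completeness of the list.

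Where your account goes wrong is the third, ``normalization'' step. First, the functional relations \eqref{eq:funcional_expr_C2_C5} and \eqref{41} are not produced by quotienting out residual diffeomorphisms: they arise directly when solving the Jacobi condition \eqref{cond4} for one of the entries of $\omega$. For instance, \eqref{eq:funcional_expr_C2_C5} is nothing but the integrated form of
\begin{equation*}
f\,\partial_v h - h\,\partial_v f + g\,\partial_w h - h\,\partial_w g = 0\,,
\end{equation*}
rewritten as $\partial_v(f/h) = (g\,\partial_w h - h\,\partial_w g)/h^2$ and integrated in $v$. Second, the paper performs no reduction by the residual symmetry group at all: the operators $C^{\,ij}_{3,k}$ are the raw general solutions of the PDE system, arbitrary functions and constants included, and the remark following the theorem states explicitly that changes of variables preserving the order-$1$ part \emph{could} simplify the order-$0$ parts further and that this is left undone. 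Consequently the irredundancy bookkeeping you single out as the delicate part is not required for the statement, and your claim that the surviving functions and constants ``cannot be removed by any admissible change of variables'' is unsupported by anything in the paper --- indeed the paper suggests the opposite. If you attempted to prove that minimality claim as part of the theorem, you would be proving something stronger than what is asserted, and possibly something false.
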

\begin{proof}
Imposing the conditions on the operators to be Hamiltonian, we obtain the extension of the classification for degenerate first order operators presented by Savoldi in three components~\cite{sav1}. { See Appendix A for more details. }
\end{proof}

\begin{remark}
In the proposed classification, we have considered three arbitrary functions for the sake of generality and in view of possible relevance for applications. However, we emphasise that changes of variables can simplify the form of operators. To do so one should look for those changes of variables that leave the order $1$ operator invariant, then apply them to the order $0$ one. 
\end{remark}

\section{Applications}\label{applications}
In this section we present some examples of non-homogeneous quasilinear systems with degenerate Hamiltonian structure of order $1+0$ in two and three components. 

\begin{example}[$2$-wave interaction system]
Mokhov\cite{mokhov98:_sympl_poiss} studies the real reduction of 2-waves interaction system formulated in terms of the system of hydrodynamic equations in two field variables $u=u(x,t)$ and $v=v(x,t)$
\begin{equation}\label{mok1}
\begin{cases}
u_t=a uv \\
v_t=a v_x+u^2
\end{cases}\,,
\end{equation}
with $a$ constant. The system admits a Hamiltonian formulation, with the operator
\begin{equation}
C^{\,i j}=\begin{pmatrix}
0&0\\0&\partial_x
\end{pmatrix}\,+\,\begin{pmatrix}
0&-u\\u&0
\end{pmatrix} \,,
\label{eq:ham_op_2waves}
\end{equation} 
and the Hamiltonian functional
\begin{equation}
    H=\frac{1}{2}\int{\left(a v^2-u^2\right)\, dx} \,.
\end{equation}
The $1+0$ operator \eqref{eq:ham_op_2waves} is degenerate, since the rank of the order $1$ term is lower than the number of components of the system. Moreover, by applying the exchange $u \leftrightarrow v$ it is evident that the operator found by Mokhov is of type $C_{2,1}^{ij}$ in \eqref{eq:C2_1}.
\end{example}

\begin{example}[Sinh-Gordon equation]
Let us consider the Sinh-Gordon equation
\begin{equation}
    \varphi_{\tau\xi}=\sinh{\varphi} \,.
\end{equation}
Applying the change of variables $\varphi=2\log u$, we have 
\begin{equation}
    \left(2\, \frac{u_\tau}{u}\right)_\xi=\dfrac{1}{2}\left(u^2-\dfrac{1}{u^2}\right) \, .
\end{equation} 
Introducing $v={2u_\tau}/{u}$ and considering the light-cone coordinates $\tau=t, \xi=t-x$
\begin{equation}
    \begin{cases}
    u_t=\dfrac{1}{2}uv\\
    v_t=v_x+\dfrac{1}{2}\left(u^2-\dfrac{1}{u^2}\right)
    \end{cases},
\end{equation}
we show that the system is Hamiltonian with the non-homogeneous hydrodynamic operator of shape \eqref{eq:C2_1} with the exchange of variables $u \leftrightarrow v$ and the function $f(u)=u/2$
\begin{equation}
    C^{\,i j}=\begin{pmatrix}
0&0\\0&\partial_x
\end{pmatrix}\,+\frac{1}{2}\begin{pmatrix}
0&u\\-u&0
\end{pmatrix} \,.\end{equation} 
The corresponding Hamiltonian density is 
\begin{equation}
    h(u,v)=\dfrac{1}{2}\left(v^2-u^2+\dfrac{1}{u^2}\right) \,.
\end{equation}
\end{example}

\subsection{Inverted Hamiltonian systems}\label{2.1}
In this section we show the connection between degenerate operators of type $1+0$ and scalar equations possessing a local Hamiltonian structure. 

Let us briefly recall that the momentum of a Hamiltonian equation $u_t=A^{ij} \, {\delta H}/{\delta u^j}$ is a functional defining a $x$-translation
\begin{equation*}
    u_x^i=A^{ij} \, \frac{\delta P}{\delta u^j}\,, \qquad \text{ with } P=\int p(u,u_\sigma)\, dx \,,
   \label{mom}
\end{equation*}
for  $i = 1, \dots, n$. Tsarev\cite{tsa3} proves that under the inversion of the independent variables $x$ and $t$, the Hamiltonian property is preserved by the system. It is well known that the momentum is a conserved quantity in a Hamiltonian system, hence there exists $q(u,u_\sigma)$ such that $p_t=q_x$. Then, one can choose $H'=\int{q(u,u_\sigma)\, d t}$ as the Hamiltonian functional of the inverted system.

Non-homogeneous operators of hydrodynamic type are related to the study of scalar evolutionary equations
possessing a \emph{local} Hamiltonian  structure. Indeed, by introducing the new set of variables
\begin{equation}
u^1=u\,,\hspace{1ex} u^2=u_x\,,\hspace{1ex} u^3=u_{xx}\,, \hspace{1ex} \dots  \,,
\label{eq:change_var}
\end{equation}
it is in some cases possible to write an equivalent non-homogeneous quasilinear system that can be seen as evolutionary with respect to the independent variable $x$, obtaining the \emph{inverted system}.

\begin{remark}Let us observe that every invertible system of order $k$ has the form
\begin{equation}\label{284}
    u_t=F_1(u,u_x,\dots , u_{(k-1)x})+F_2(u,u_x,\dots , u_{(k-1)x}) u_{kx} \,,
\end{equation}
where $F_1,F_2$ are arbitrary functions. Note that this is the case of KdV and many other examples in nonlinear phenomena. Indeed, considering the lower derivatives as parameters, we need the system to be linear in $u_{kx}$ in order to conserve linearity in $u_t$ once inverted.\end{remark}

The following result offers an explicit connection between non-homogeneous hydrodynamic operators and inverted systems.
\begin{proposition}\label{139} Let us consider the evolutionary equation $u_t = F(u,u_{\sigma})$ endowed with a local Hamiltonian structure and a momentum density $p$ in \eqref{mom} depending on  $u$ only. Then, if the inverted system in the set of variables  \eqref{eq:change_var} admits a local Hamiltonian structure, this is given in terms of a non-homogeneous operator of hydrodynamic type.
\end{proposition}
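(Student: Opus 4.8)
The plan is to make the inverted system fully explicit, recognise it as a first-order quasilinear system carrying an ultralocal inhomogeneity, and then use the hypothesis $p=p(u)$ to control the differential order of the transported Hamiltonian, whence the $1+0$ form of the operator follows. First I would invoke the Remark preceding the statement: an invertible equation of order $k$ reads $u_t=F_1(u,\dots,u_{(k-1)x})+F_2(u,\dots,u_{(k-1)x})\,u_{kx}$. Setting $u^1=u,\,u^2=u_x,\dots,u^k=u_{(k-1)x}$ as in \eqref{eq:change_var}, the chain $(u^1)_x=u^2,\dots,(u^{k-1})_x=u^k$ together with $(u^k)_x=u_{kx}=(u^1_t-F_1)/F_2$ exhibits the inverted system as evolutionary in $x$ and linear in the single $t$-derivative $u^1_t$. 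Hence it has the shape $u^i_x=\bar v^i_j(u)\,u^j_t+\bar f^i(u)$ with $\bar v^i_j$ carrying a single non-zero entry, i.e.\ exactly a non-homogeneous quasilinear system of first order (with the roles of $x$ and $t$ exchanged), for which a $1+0$ operator is the natural candidate.

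Next I would transport the Hamiltonian structure through Tsarev's result recalled above. Conservation of the momentum, $\tfrac{d}{dt}\int p\,dx=\int p'(u)F\,dx=0$, forces $p'(u)F$ to be a total $x$-derivative, $p'(u)F=q_x$, and $H'=\int q\,dt$ is the Hamiltonian of the inverted system. The decisive point is an order count: since $F$ has differential order $k$ and $p$ depends on $u$ only, $p'(u)F$ has order exactly $k$, so its $x$-primitive $q$ has order exactly $k-1$. Under \eqref{eq:change_var} a derivative of order $k-1$ is precisely the field variable $u^k$, so $q=q(u^1,\dots,u^k)$ carries no $t$-derivatives: it is a density of hydrodynamic type and $\delta H'/\delta u^j=\partial q/\partial u^j$ is ultralocal.

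Finally I would pin down the operator. Writing the inverted evolution as $u^i_x=\hat A^{ij}\,\partial q/\partial u^j$ with $\hat A$ a local Hamiltonian operator, I would decompose $\hat A=\sum_{r\ge 0}\hat A_r$ into its homogeneous components in the number of $t$-derivatives. Applying a piece $\hat A_r$ of order $r$ to the order-zero covector $\partial q/\partial u^j$ produces a term of differential order $r$; since the left-hand side has order at most one, every component with $r\ge 2$ must drop out, leaving $\hat A=g^{ij}\partial_t+b^{ij}_k u^k_t+\omega^{ij}$, a non-homogeneous operator of hydrodynamic type. Its Hamiltonianity is guaranteed by Tsarev, and Theorems \ref{th:ham_A}, \ref{th:ham_omega} and \ref{thm1} then characterise the three blocks.

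The main obstacle is this last step: deducing $\hat A_r=0$ for $r\ge 2$ from the vanishing of $\hat A_r(\partial q/\partial u)$ for the single covector $\partial q/\partial u$. Generically the top-order contribution is $g_{(r)}^{ij}\,(\partial^2 q/\partial u^j\partial u^l)\,u^l_{rt}$, so the argument is immediate when the Hessian of $q$ is non-degenerate; but precisely in the degenerate situations motivating this paper one must argue more carefully, either by exploiting the skew-adjointness built into the Hamiltonian axioms to propagate the vanishing of the leading coefficients down the homogeneity filtration, or by appealing directly to the explicit transport of $A$ under \eqref{eq:change_var}. A secondary point requiring attention is confirming the order count for $q$ uniformly in the arbitrary functions $F_1,F_2$.
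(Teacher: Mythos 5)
Your proposal follows essentially the same route as the paper's proof: momentum conservation gives $p_u(u)\,F=q_x$, the hypothesis $p=p(u)$ yields the order count showing $q$ has order at most $k-1$, hence $H'=\int q\,dt$ is of hydrodynamic type in the variables \eqref{eq:change_var}; Tsarev's inversion theorem transports the Hamiltonian property to the inverted system; and locality of the resulting operator is what forces the $1+0$ form. The one point of divergence is the final step, and the comparison is instructive. The paper handles it by assertion alone --- the operator $B^{ij}$, ``being local, must be of type $1+0$'' --- whereas you expand the hypothetical local operator into homogeneous components $\hat A_r$ graded by the number of $t$-derivatives and try to show $\hat A_r=0$ for $r\ge 2$. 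The obstacle you isolate is real: vanishing of $\hat A_r\bigl(\partial q/\partial u^j\bigr)$ on the single covector $\partial q/\partial u^j$ forces $\hat A_r=0$ only under extra assumptions (e.g.\ non-degeneracy of the Hessian of $q$), and non-degeneracy is exactly what may fail in the situations motivating this paper. But this is a gap in the paper's own one-line justification just as much as in yours; your version makes explicit what the published argument silently assumes, and your suggested repairs (propagating the vanishing down the filtration using skew-adjointness and the Jacobi identity, or transporting the original operator explicitly through \eqref{eq:change_var}) are the natural ways to close it.

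Two minor remarks. Your bookkeeping $q=q(u^1,\dots,u^k)$ is the consistent one, since order $k-1$ means dependence on $u_{(k-1)x}=u^k$; the paper's $q(u^1,\dots,u^{k-1})$ is an index slip, as its own inverted KdV Hamiltonian $H=\int\bigl(3(u^1)^{n+1}-u^1u^3+\tfrac{1}{2}(u^2)^2\bigr)\,dx$, which depends on $u^3$ with $k=3$, confirms. And your use of the Remark on the form \eqref{284} to exhibit the inverted system explicitly as a first-order quasilinear system with ultralocal part is a useful elaboration of what the paper obtains only implicitly by citing Tsarev.
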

\begin{proof}We observe the following 
\begin{equation}
    q_x=p_t=p_u(u)\, u_t=p_u(u)\, F(u,u_\sigma) \,, \hspace{2ex} \sigma \leq k \,, 
\end{equation}
where $p_t$ is of order $\le k$, at most equal to the order of the equation, and so is $q_x$. Hence, $q(u,u_\sigma)$ is of order at most $k-1$. This implies that the Hamiltonian $H'= \int q(u^1,\dots, u^{k-1}) \, dt$ is of hydrodynamic type for the inverted system in the new  variables. 
In \cite{tsa3,hampds}, it has been proved that the Hamiltonian property is preserved after a change of dependent variables and an inversion of $t$ and $x$. Then, the inverted system is quasilinear of first order and already Hamiltonian. 
The operator $B^{\,ij}$ in 
\begin{equation}
    u^i_x = B^{\,ij} \, \dfrac{\delta H'}{\delta u^j} \,,
\end{equation}
being local, must be of type $1+0$, i.e.\ a non-homogeneous operator of hydrodynamic type. 
\end{proof}

The Proposition \ref{139} 
justifies a deeper investigation of such operators, for which KdV offers a leading example, as follows. 
We emphasise the previous theorem does not guarantee that the operator is in general non-degenerate. 

\vspace{2ex}

\begin{example}[KdV equation - I]\label{ex0}
Let us consider the KdV equation
\begin{equation}
    u_t=6uu_x+u_{xxx} \,,
\end{equation} which is widely known to be Hamiltonian. Inverting the equation, we obtain the evolutionary system with respect to $x$ in three components $u^1(x,t)$, $u^2(x,t)$,  $u^3(x,t)$ defined as $u = u^1 , \,u_x = u^2,\, u_{xx} = u^3 \,,$
yielding the following non-homogeneous system of hydrodynamic type
\begin{equation}\label{kdvsys}
\begin{cases}
u^1_x=u^2\\[1.2ex]
u^2_x=u^3\\[1.2ex]
u^3_x=u^1_t+6u^1u^2
\end{cases}\,.
\end{equation} 
This system is Hamiltonian with the following non-homogeneous hydrodynamic type operator \cite{tsa3}
\begin{equation}
    C^{\,ij}=\begin{pmatrix}
    0&0&0\\0&0&0\\0&0&1
    \end{pmatrix}\partial_t+\begin{pmatrix}
    0&1&0\\-1&0&6u^1\\0&-6u^1&0
    \end{pmatrix} \,,
    \end{equation}
   with the leading coefficient $g^{ij}$ being degenerate.     It is easy to see that applying the change of variables $u^1=w$, we obtain again the operator \eqref{eq:C2_three}, where 
   
 \vspace{1ex}  
   \centering{
    \begin{tabular}{l l}
       $g(v,w)=0\,,$  & $\hspace{2ex} f(v,w)=h(v,w)\,l(w)\,,$ \\[1.5ex]
        $l(w)=6w\,,$ & $\hspace{2ex} h(v,w)=-1\,.$
    \end{tabular}
}
\end{example}

\vspace{1ex}

\begin{example}[KdV equation - II]\label{ex1}
Mokhov \cite{mk2} finds a  transformation of variables (also known as \emph{local quadratic unimodular change})
\begin{equation}\label{change}
\begin{split}
u^1=\frac{w^1-w^3}{\sqrt{2}}\,,\qquad u^2=w^2\,,\qquad  
u^3=\frac{w^1+w^3}{\sqrt{2}}+\left(w^1-w^3\right)^2 \,,
\end{split} 
\end{equation}
such that the KdV equation reads as 
\begin{equation}\label{kdvsys2}
    \begin{cases}
    w^1_x=-\dfrac{1}{2}\left(w^1-w^3\right)_t+w^2\left(w^1-w^3\right)+\dfrac{1}{\sqrt{2}}w^2\\
    w^2_x=\left(w^1-w^3\right)^2+\dfrac{1}{\sqrt{2}}\left(w^1+w^3\right)\\
    w^3_x=-\dfrac{1}{2}\left(w^1-w^3\right)_t+w^2\left(w^1-w^3\right)-\dfrac{1}{\sqrt{2}}w^2    \end{cases} \,.
\end{equation}
After this local change, the KdV is a bi-Hamiltonian system with respect to two non-homogeneous operators $1+0$ of hydrodynamic type, one of these being the operator 
\begin{equation}\label{1368}
\begin{split} 
&C^{\,ij}=\frac{1}{2}\begin{pmatrix}
1&0&1\\0&0&0\\1&0&1
\end{pmatrix}\partial_t +\begin{pmatrix}
0&w^1-w^3+\frac{1}{\sqrt{2}}&0\\
w^3-w^1-\frac{1}{\sqrt{2}}&0&w^3-w^1+\frac{1}{\sqrt{2}}\\
0&w^1-w^3-\frac{1}{\sqrt{2}}&0
\end{pmatrix} \,,
\end{split} 
\end{equation}
which is degenerate, since $\text{rank}(g^{i j})=1$. 
The Hamiltonian given in terms of the new variables is 
\begin{equation}
    H=\int{\left((w^1)^2-(w^2)^2-(w^3)^2\right)dx} \,.
\end{equation}
To show that the obtained operator is indeed one of those classified above, we consider a new change of variables
\begin{equation}
w^1=\frac{\bar{u}^1-\bar{u}^3}{\sqrt{2}},\qquad w^2=\bar{u}^2,\qquad w^3=\frac{\bar{u}^3-\bar{u}^1}{\sqrt{2}} \,.
\end{equation} 
The degenerate first order operator is written with the leading coefficient $\bar{g}=d\bar{u}^1\otimes d\bar{u}^1$ and the skew-symmetric bivector 
\begin{equation*}
    \bar{\omega}=-\sqrt{2}\bar{u}^3\left(d\bar{u}^1\wedge d\bar{u}^2 - d\bar{u}^2 \wedge d\bar{u}^3\right) \,.
\end{equation*}
The operator \eqref{1368} is of type $C^{\,i j}_{2,2}$ in three components showed in \eqref{eq:C2_three}. In particular, 

\vspace{1ex}

\centering{
    \begin{tabular}{l l}
       $g(v,w)=0\,,$  & $\hspace{2ex} f(v,w)=l(w)h(v,w)\,,$ \\[1.5ex]
        $l(w)=-1\,,$ & $\hspace{2ex} h(v,w)=\sqrt{2}w\,.$
    \end{tabular}
}
\end{example}

\vspace{1ex}

\begin{example}[Generalised KdV equation]\label{ex3} Let us consider the generalised KdV equation
\begin{equation}\label{mkdv}
    u_t+3(n+1)\,u^n\,u_x+u_{xxx}=0
\end{equation}
where $n$ is a positive integer. It is known that \eqref{mkdv} is Hamiltonian with the operator $\partial_x$ for any $n$. The case $n=2$ corresponds to the modified KdV equation (mKdV), it is integrable and it has a second Hamiltonian structure, with operator $\partial_x^3 + 6\, \partial_x \, u\, \partial_x^{-1}\, u\, \partial_x$. The Hamiltonians associated with mKdV are 
\begin{align}
    H_1 &= \int \left( \dfrac{3}{4} \, u^4 + \dfrac{1}{2}\,  u_x^2 \right) dx \,,  \hspace{10ex}    H_2 = \int \dfrac{1}{2} u^2 \, dx \,.
\end{align}

In \eqref{mkdv} we introduce the variables $u^1=u,\,u^2=u_{x},\,u^3=u_{xx}$, so that the equation reads as a quasilinear system of first order PDEs
\begin{equation}\label{sysmkdv}
    u^1_x=u^2\,, \qquad 
    u^2_x=u^3\,, \qquad 
    u^3_x=-u^1_t-3(n+1)(u^1)^n \, u^2 \,.
\end{equation}
The Hamiltonian structure is still conserved after the scalar equation is transformed into a system, i.e.\ \eqref{sysmkdv} has Hamiltonian structure with the operator 
\begin{equation}\label{opmkdv}
\begin{split}
    C^{\,ij}&=\begin{pmatrix}
    0&0&0\\0&0&0\\0&0&1
    \end{pmatrix}\,\partial_t+\begin{pmatrix}
    0&1&0\\-1&0&-3(n+1)(u^1)^{n-1}\\
    0&3(n+1)(u^1)^{n-1}&0
    \end{pmatrix}
\end{split} 
\end{equation} and the Hamiltonian functional 
\begin{equation}
    H=\int\left( 3(u^1)^{n+1}-u^1u^3+\frac{(u^2)^2}{2}\right) \, dx.
\end{equation}
The operator \eqref{opmkdv} is $C^{\,ij}_{3,2}$ in \eqref{eq:C2_three} with the exchange $u^1 \leftrightarrow u^3$ and

\vspace{2ex}

\centering{
    \begin{tabular}{l l}
       $g(u^1,u^2)=0\,,$  & $\hspace{2ex} f(u^1,u^2)=h(u^1,u^2)\,l(u^1)\,,$ \\[1.5ex]
        $l(u^1)=3(n+1)(u^1)^{n-1}\,,$ & $\hspace{2ex} h(u^1,u^2)=-1\,.$
    \end{tabular}
}
\end{example}

\vspace{2ex}

\begin{remark}
Let us observe that for $n>2$ the generalised KdV equation is not integrable, even if it is Hamiltonian as proved in the previous example. We emphasise that this feature is more general than the integrability property. 
\end{remark}

\vspace{1ex}

We finally present two examples violating the hypothesis of locality, either in terms of the momentum or of the operator defined for the inverted Hamiltonian structure.

\begin{example}
We consider the linearised KdV equation 
\begin{equation}
    u_t = u_{xxx} \,,
\end{equation}
for which the inverted system is easily given in the new variables by 
\begin{equation}
    u^1_x = u^2 \, \quad
    u^2_x = u^3 \, \quad
    u^3_x = u^1_t  \,.
\end{equation}
The associated momentum is given in terms of the  density
$  p(u) = \partial_x^{-2} u $. 
Here again, it is not possible to write the resulting system with Hamiltonian operator of type $1+0$ \cite{tsa3}. 
\end{example}
\begin{example}
We consider the Harry-Dym equation \cite{koinho,harry_dym_bi_hamiltonian}
\begin{equation}
\begin{split}
    u_t &= \left( \dfrac{1}{\sqrt{u}} \right)_{xxx} = -\dfrac{15}{8} u^{-7/2} \left(u_x\right)^3 + \dfrac{9}{4} u^{-5/2} u_x \, u_{xx} - \dfrac{1}{2} u^{-3/2} u_{xxx} \,,
\end{split}
\end{equation}
admitting the Hamiltonian structures 
\begin{align}
    u_t &= A_1 \, \dfrac{\delta H_1}{\delta u} = -\dfrac{1}{2} \, \partial_{x}^3 \,\dfrac{\delta H_1}{\delta u} \,, \qquad \text{ with } H_1= -\int 4\,\sqrt{u} \,\, dx \,, \\[2ex]
    u_t &= A_2 \, \dfrac{\delta H_2}{\delta u} = -\left( 2 u \partial_x - u_x \right) \dfrac{\delta H_2}{\delta u} \,, \qquad 
 \text{ with } H_2= - \int \left( \dfrac{15}{32} \, u^{-7/2}\,u_x - \dfrac{1}{16} \, u^{-5/2} \,u_{xx} \right) dx \,.
\end{align}
Introducing the variables $u_x = u^2, \, u_{xx} = u^3$, the inverted system is 
\begin{equation}
    u^1_x = u^2\,, \qquad 
    u^2_x = u^3 \,, \qquad 
    u^3_x = - 2 (u^1)^{3/2} u^1_t - \dfrac{15}{4} (u^1)^{-2} (u^2)^3 + \dfrac{9}{2} u^1\, u^2\, u^3 \,.
\end{equation}
The momentum $P$ associated with the operator $A_2$ is 
\begin{equation*}
    u_x = - \left(  2 u \, \partial_x - u_x \right) \dfrac{\delta P}{\delta u} \,, \hspace{10ex} P = \int p(u) \, dx = - \int u\, dx \,,
\end{equation*}
and following the above procedure the Hamiltonian $H'$ as a functional in the new variables is 
\begin{equation}
    H' = - \int \left( \dfrac{3}{4} (u^1)^{-5/2} \, (u^2)^2 - \dfrac{1}{2} (u^1)^{-3/2} \, u^3 \right) \, dx \,.
\end{equation}
With this Hamiltonian it is not possible to build a local operator of the form $1+0$ for the inverted system, hence this operator will be non-local. 
\end{example}


\section{Conclusions}
The study of non-homogeneous quasilinear systems of first order PDEs is an ongoing research topic in integrable systems and Hamiltonian PDEs. 
To the authors' knowledge, a general criterion to discuss integrability for this kind of systems is not currently known, unlike the homogeneous systems~\cite{tsarev91:_hamil}. This paper represents a first step towards the investigation of integrability of non-homogeneous systems, focusing on the Hamiltonian property. The study of possible bi-Hamiltonian structures associated with these type will be the subject of a future paper. 
Non-homogeneous operators of order $k+m$ play an important role in nonlinear phenomena and their investigation represents another interesting topic \cite{LSV:bi_hamil_kdv,koinho,falq,JANNELLI2023107010,math10060935}.  Even in the simplest case where $k=1$, and $m=0$ we showed how the conditions for the operator to be Hamiltonian lead to a specific form, this being exactly solvable.  Higher order operators require a more general study, especially for what concerns the degenerate case. 

As future perspectives, the authors emphasise the necessity to further investigate the integrability of non-homogeneous quasilinear systems, the compatibility conditions between systems and operators in the sense of \cite{VerVit1}, but also their associated geometric structure, following the lead of the homogeneous case, where both operators and systems are 
linked to projective algebraic geometry \cite{AgaFer,FPV14,FerPavVit1,VerVit2} and differential Riemannian geometry.  Finally, the discrete analogous of non-homogeneous operators were introduced by Dubrovin in \cite{dubrovin89}, letting the classification method described in this paper suitable for discrete operators as well. 

\vspace{1ex} 

\textbf{Acknowledgements} The authors thank C. Benassi, F. Coppini, E. V. Ferapontov, A. Moro,  M. Pavlov and R. Vitolo for stimulating discussions and interesting remarks. 

PV  also acknowledges the financial support of GNFM of the Istituto Nazionale di Alta Matematica, of PRIN 2017 \textquotedblleft Multiscale phenomena in Continuum Mechanics: singular limits, off-equilibrium and transitions\textquotedblright, project number 2017YBKNCE and the research project Mathematical Methods in Non Linear Physics (MMNLP) by the Commissione Scientifica Nazionale -- Gruppo 4 -- Fisica Teorica of the Istituto Nazionale di Fisica Nucleare (INFN). 

MD would like to thank the Isaac Newton Institute for Mathematical Sciences, Cambridge, for support and hospitality during the programme ``Dispersive hydrodynamics: mathematics, simulation and experiments, with applications in nonlinear waves'' (HDY2) where work on this paper was undertaken. This work was supported by EPSRC grant no EP/R014604/1. 

\newpage

\bibliography{biblio}
\bibliographystyle{plain}

\newpage 
\appendix
{
\section{Derivation of the operators in Theorem \ref{thmde2} and Theorem \ref{thmde3}  
}

{
We give the details of the procedure followed to compute the classifications of Section \ref{3}. 
The computations are carried out with the support of computer algebra systems (Maple, Reduce and Mathematica) and finally checked by hand. 

For the sake of simplicity, we describe the first nontrivial operator obtained in three components, in the text identified as $C_{3,2}$. We start by considering the degenerate operator of order $1$ for a system in three components in the Savoldi's classification \cite{sav1} 
\begin{equation} \label{eq:g_b_simple}
    g^{ij}=\begin{pmatrix}
    \partial_x&0&0\\
    0&0&0\\
    0&0&0
    \end{pmatrix} , \qquad   b^{ij}_k=\begin{pmatrix}
    0&0&0\\
    0&0&0\\
    0&0&0
    \end{pmatrix} \,,  \qquad \forall k \in \{1,2,3\}\,.
\end{equation}
We add to the operator an order $0$ operator
\begin{equation}
\begin{pmatrix}\partial_x&0&0\\0&0&0\\0&0&0
    \end{pmatrix}+
    \begin{pmatrix}
        \omega^{11}&\omega^{12}&\omega^{13}\\\omega^{21}&\omega^{22}&\omega^{23}\\
        \omega^{31}&\omega^{32}&\omega^{33}
    \end{pmatrix} \,.
    \label{eq:omega_proof}
\end{equation}
Being $\omega$ an ultralocal tail, its elements $\omega^{ij}$ are functions at most depending on the three variables of the system $u,v,w$. 

The operator \eqref{eq:omega_proof} as a whole is Hamiltonian if its parts are Hamiltonian and the compatibility conditions established in Theorem II.3 are satisfied. In particular, the ultralocal term $\omega$ is Hamiltonian if it fulfills the Theorem II.1. We reduce the number of free functions in \eqref{eq:omega_proof} by using the skew-symmetry property 
\begin{equation}
    \begin{pmatrix}\partial_x&0&0\\0&0&0\\0&0&0
    \end{pmatrix}+
    \begin{pmatrix}
        0&\omega^{12}&\omega^{13}\\-\omega^{12}&0&\omega^{23}\\-\omega^{13}&-\omega^{23}&0
    \end{pmatrix} \,,
\end{equation}
so that the unknown functions are
\begin{equation} \label{eq:arbi}
    \omega^{12}(u,v,w), \quad \omega^{13}(u,v,w), \quad \omega^{23}(u,v,w) \,.
\end{equation}
To implement the constraints expressed in Theorem II.3, we evaluate the tensor $\Phi^{i j k}$ for the case \eqref{eq:g_b_simple}
\begin{align}
        \Phi^{i j k} & = g^{i s}\, \frac{\partial \omega^{j k}}{\partial u^s} - b^{i j}_{s} \, \omega^{s k} - b^{i k}_{s} \, \omega^{j s} = g^{i s}\, \frac{\partial \omega^{j k}}{\partial u^s} \,,
\end{align} 
where the sum over $s$ is intended via repeated indices. For three components $s=1,2,3$ and we use the notation for the variables $u^1 = u$, $u^2 = v$, $u^3=w$. The only non-zero element in $g$ is $g^{1 1}=1$, hence the constraint (26) on the non-zero elements of the tensor $\Phi^{ijk}$ takes the form $\Phi^{1jk}=\Phi^{k1j}$, with
\begin{equation}
\begin{split} 
    \Phi^{1 j k} &= \begin{pmatrix}
        0 & \dfrac{\partial \omega^{1 2}}{\partial u} & \dfrac{\partial \omega^{1 3}}{\partial u} \,\, \\[2.3ex] 
        -\dfrac{\partial \omega^{1 2}}{\partial u} & 0 & \dfrac{\partial \omega^{2 3}}{\partial u} \,\, \\[2.3ex]
        -\dfrac{\partial \omega^{1 3}}{\partial u} & -\dfrac{\partial \omega^{2 3}}{\partial u} & 0 
    \end{pmatrix}  = \begin{pmatrix}
        0 & \quad 0 & \quad 0 \qquad \\[3ex] 
        \dfrac{\partial \omega^{1 2}}{\partial u} & \quad 0 & \quad 0 \qquad \\[3ex] 
        \dfrac{\partial \omega^{1 3}}{\partial u} & \quad 0 & \quad 0 
    \end{pmatrix} = \Phi^{k1j} \,.
\end{split} 
\end{equation}
The constraints on the field variables are 
\begin{equation} \label{eq:arbi_1}
    \dfrac{\partial \omega^{ij}(u,v,w) }{\partial u} = 0 \quad \implies \quad  \omega^{ij}(u,v,w) =\omega^{ij}(v,w) \,,
\end{equation}
i.e.\ they do not depend on the variable $u$. We introduce the notation 
\begin{equation} \label{eq:notation}
    \begin{cases}
        \omega^{12}(v,w) = f(v,w)  \\ \omega^{13}(v,w)  = g(v,w)  \\ \omega^{23}(v,w)  = h(v,w) 
    \end{cases} \,.
\end{equation}
The constraint (27) yields
\begin{equation}
    \dfrac{\partial^2 \omega^{ij}}{\partial u^2} = 0 \,, \quad \dfrac{\partial^2 \omega^{ij}}{\partial u \, \partial v} = 0 \,, \quad 
    \dfrac{\partial^2 \omega^{ij}}{\partial u \, \partial w} = 0 \,,
\end{equation}
not producing any further restriction for the form of the functions, given \eqref{eq:arbi_1}. 
Finally, the closure requirement (19) is 
\begin{equation}
    \omega^{12} \, \dfrac{\partial \omega^{23}}{\partial v} -  \omega^{23} \,\dfrac{\partial \omega^{12}}{\partial v} +
 \omega^{13} \, \dfrac{\partial \omega^{23}}{\partial w}  - \omega^{23} \, \dfrac{\partial \omega^{13}}{\partial w} = 0 \,,
\end{equation}
with the notation \eqref{eq:notation} this becomes 
\begin{equation}
\begin{split}
    &f(v,w) \, \dfrac{\partial h(v,w)}{\partial v} -  h(v,w) \,\dfrac{\partial f(v,w)}{\partial v} +
 g(v,w) \, \dfrac{\partial h(v,w)}{\partial w}  - h(v,w) \, \dfrac{\partial g(v,w)}{\partial w} = 0 \,.
\end{split}
\end{equation}
We solve the last constraint with respect to the field $f(v,w)$. Observing that  
\begin{equation}
    \dfrac{\partial}{\partial v} \left(\dfrac{f}{h}\right) = \dfrac{1}{h} \dfrac{\partial f}{\partial v} - \dfrac{f}{h^2} \dfrac{\partial h}{\partial v} \,,
\end{equation}
we obtain the expression for $f(v,w)$ given in (33). The operator is then 
 \begin{equation*}
        C_{3,2}^{\,i j} = 
        \begin{pmatrix}
        \partial_x & f(v,w) & g(v,w) \\[1ex]
        -f(v,w) & 0 & h(v,w) \\[1ex]
        -g(v,w) & -h(v,w) & 0 \\
        \end{pmatrix} \,.
    \end{equation*}

 
 We remark that for fixed rank of $g^{ij}$ the resulting conditions  strongly depend on the structure of $b^{ij}_k$. For instance, for the operator $C_{3,3}^{ij}$ we have the same operator $g^{ij}$ as in \eqref{eq:g_b_simple}, but different $b^{ij}_k$
 \begin{equation} 
    b^{ij}_1=b^{ij}_2=\begin{pmatrix}
    0&0&0\\
    0&0&0\\
    0&0&0
    \end{pmatrix} , \qquad   b^{ij}_3=\begin{pmatrix}
    0&1&0\\
    -1&0&0\\
    0&0&0
    \end{pmatrix} \,.
\end{equation}
We look for the corresponding operator $\omega$ after considering the skew-symmetry property, hence we have three field variables $\omega^{1 2},\,\omega^{1 3},\,\omega^{2 3}$. 

\noindent 
The first conditions are imposed by comparing the tensors
 \begin{equation}
\begin{split} 
    \Phi^{1 j k} &= \begin{pmatrix}
        0 & -\omega^{1 3} + \dfrac{\partial \omega^{1 2}}{\partial u} & \dfrac{\partial \omega^{1 3}}{\partial u} \,\, \\[2.3ex] 
        \omega^{1 3} -\dfrac{\partial \omega^{1 2}}{\partial u} & 0 & \dfrac{\partial \omega^{2 3}}{\partial u} \,\, \\[2.3ex]
        -\dfrac{\partial \omega^{1 3}}{\partial u} & -\dfrac{\partial \omega^{2 3}}{\partial u} & 0 
    \end{pmatrix}  = \begin{pmatrix}
        0 & \quad 0 & \quad 0 \qquad \\[3ex] 
        -\omega^{1 3} + \dfrac{\partial \omega^{1 2}}{\partial u} & \quad \omega^{2 3} & \quad 0 \qquad \\[3ex] 
        \dfrac{\partial \omega^{1 3}}{\partial u} & \quad 0 & \quad 0   
    \end{pmatrix} =  \Phi^{k1j} \,.
\end{split} 
\end{equation}
At this stage we can already reduce the number of free functions, since $\omega^{2 3} = 0$. Solving the remaining equations, we find the dependence of the fields $\omega^{1 2}$, $\omega^{1 3}$ on the variables $u,\,v,\,w$
\begin{equation}
\begin{split}
    \omega^{1 2} (u,\,v,\,w) &= f(v,\,w) + u\,g(v,\,w) \\  \omega^{1 3}(u,\,v,\,w) &= g(v,\,w) \,.
\end{split}
\end{equation}
The constraints from the relation \eqref{cond4} further reduce the number of free functions, in particular $g(v,\,w) = 0$. The corresponding operator is then 
\begin{equation*}
      C_{3,3}^{\,i j} = 
        \begin{pmatrix}
        \partial_x & w_x + f(v,w) & 0 \\[1.5ex]
        -w_x -f(v,w) & 0 & 0 \\[1.5ex]
        0 & 0 & 0 \\
        \end{pmatrix} \,.
\end{equation*}
The same procedure has been carried out for all the possible forms of operators $g^{i j}$ and $b^{i j }_k$, obtaining the above mentioned classification.

}

}

\end{document}